\documentclass[a4paper]{article}
\setlength{\oddsidemargin}{0.3in} \setlength{\evensidemargin}{0.3in} \setlength{\textwidth}{6.2in}

\usepackage[english]{babel}
\usepackage[utf8]{inputenc}
\usepackage{amsmath}
\usepackage{graphicx}
\usepackage[colorinlistoftodos]{todonotes}

\usepackage{amsthm}
\usepackage{amsfonts}
\usepackage{amssymb}
\usepackage{amscd}
\usepackage{stmaryrd}
\usepackage{float}
\usepackage[all]{xy}
\usepackage{relsize}
\usepackage{geometry}
\usepackage{subcaption}

\usepackage{natbib}
\bibliographystyle{abbrvnat}
\setcitestyle{numbers,open={[},close={]}}

\usepackage[noend]{algpseudocode}
\usepackage{algorithm,algorithmicx}

\algnewcommand\algorithmicforeach{\textbf{for each}}
\algdef{S}[FOR]{ForEach}[1]{\algorithmicforeach\ #1\ \algorithmicdo}

\theoremstyle{plain}
\newtheorem{theorem}{Theorem}[section]

\newtheorem{lemma}[theorem]{Lemma}

\theoremstyle{definition}


\newcommand{\lemref}[1]{Lemma \ref{#1}}

\newcommand{\PP}{\ensuremath \mathbb{P}}

\newcommand{\C}{\ensuremath \mathcal{C}}

\usepackage{algpseudocode,algorithm,algorithmicx}
\algdef{S}[FOR]{ForEach}[1]{\algorithmicforeach\ #1\ \algorithmicdo}

\title{Efficient Correlation Clustering Methods for Large Consensus Clustering Instances}

\author{Nathan Cordner and George Kollios}

\date{July 2023}

\begin{document}

\graphicspath{{figures/}}

\newgeometry{top = 1 in, left = 1 in, right = 1 in, bottom = 1 in}

\maketitle

\begin{abstract}

Consensus clustering (or clustering aggregation) inputs $k$ partitions of a given ground set $V$, and seeks to create a single partition that minimizes disagreement with all input partitions. State-of-the-art algorithms for consensus clustering are based on correlation clustering methods like the popular Pivot algorithm. Unfortunately these methods have not proved to be practical for consensus clustering instances where either $k$ or $V$ gets large.  

In this paper we provide practical run time improvements for correlation clustering solvers when $V$ is large. We reduce the time complexity of Pivot from $O(|V|^2 k)$ to $O(|V| k)$, and its space complexity from $O(|V|^2)$ to $O(|V| k)$---a significant savings since in practice $k$ is much less than $|V|$. We also analyze a sampling method for these algorithms when $k$ is large, bridging the gap between running Pivot on the full set of input partitions (an expected 1.57-approximation) and choosing a single input partition at random (an expected 2-approximation). We show experimentally that algorithms like Pivot do obtain quality clustering results in practice even on small samples of input partitions.

\end{abstract}

\section{Introduction}

In this paper we examine the consensus clustering problem (also known as clustering aggregation). Given a set of input clusterings over a single ground set, consensus clustering seeks to create a single clustering that is most ``similar'' to the inputs provided. For example, a number of organizers for a dinner party may have come up with separate seating arrangements for guests; consensus clustering would find a ``consensus'' seating arrangement that minimizes ``disagreements'' (pairs of guests who are seated at the same table in one arrangement, but at different tables in the other) between the consensus arrangement and each of the arrangements from party organizers. 

Consensus clustering is closely related to the correlation clustering problem. As originally defined by Bansal et al. \cite{bansal2004correlation}, ``min disagreement'' correlation clustering inputs a complete graph $G = (V, E)$ where every pair of nodes is assigned a positive (+) or negative (-) relationship. The objective is to cluster together positively related nodes and separate negatively related ones, minimizing the total number of mistakes along the way. This clustering paradigm has been used in many applications, such has its original motivation of classification \cite{bansal2004correlation}, database deduplication \cite{haruna2018hybrid}, and community detection in social networks \cite{veldt2018correlation,shi2021scalable}. This formulation of graph clustering has been especially useful, since a specific number of clusters does not need to be specified beforehand and the only information needed as input concerns the relationship between objects---not about the objects themselves.

Consensus clustering can be easily reduced to a generalized version of correlation clustering where graph edges are no longer just positive or negative, but are assigned weights (or probabilities) between 0 and 1 \cite{ailon2008aggregating}. The state-of-the-art consensus clustering algorithm is based on the Pivot method for correlation clustering by Ailon et al. \cite{ailon2008aggregating}, which yields a 1.57-approximation result for consensus clustering. One bottleneck that previous authors have run into when applying the Pivot algorithm to consensus clustering involves computing and storing edge weights \cite{goder2008consensus}, making it prohibitive to perform consensus clustering when the number nodes in the graph is large. In Section \ref{sec:consensus_onthefly} we show a practical run time improvement that enables Pivot and other correlation clustering algorithms to run on these instances. In particular we reduce the running time of a single run of Pivot from $O(|V|^2 k)$ to $O(|V| k)$, where $V$ is the ground set and $k$ is the number of input clusterings. We also reduce the amount of memory needed from $O(|V|^2)$ to $O(|V| k)$.

Another bottleneck experienced when running correlation clustering algorithms for consensus clustering is when the number of input clusterings is large. In Section \ref{sec:consensus_sampling} we analyze the effect of computing a consensus on small samples of input clusterings, and show that correlation clustering algorithms like Pivot still produce quality results. In particular we develop a function that computes the expected approximation bound guaranteed by the Pivot algorithm when performing consensus clustering on a sample of input clusterings, filling the gap between using the full set (an expected 1.57-approximation) and choosing a single input clustering at random (known to be an expected 2-approximation).

We conclude this paper with several experimental results that demonstrate the practicality of our new methods for consensus clustering (see Section \ref{sec:consensus_experiments}).


\subsection{Related Work}

\subsubsection{Correlation Clustering}

The NP-hard correlation clustering problem was introduced by Bansal et al. \cite{bansal2004correlation}, who also provided its first constant approximation algorithm in the min disagreement setting. The best known approximation factor is $1.994 + \epsilon$, from a linear program rounding method due to Cohen et al. \cite{cohen2022correlation}. Correlation clustering remains an active area of research, and many variations of the problem have arisen over time; a general introduction to the correlation clustering problem and some of its early variants is given by Bonchi et al. \cite{bonchi2014correlation}.

\subsubsection{Weighted Correlation Clustering}

Weighted correlation clustering was first considered by Bansal et al. \cite{bansal2004correlation}. Ailon et al. adapted the Pivot algorithm and a LP rounding method for general weighted graphs; for the special probability weights case they showed that Pivot yields a 5-approximation and their LP rounding method yields a 2.5-approximation \cite{ailon2008aggregating}. Correlation clustering with probability weights has been adapted for probabilistic \cite{kollios2011clustering} and uncertain \cite{mandaglio2020and} graphs. On graphs with generalized weights, the best approximation ratio for correlation clustering is $O(\log n)$ \cite{charikar2005clustering,demaine2006correlation}. Puleo and Milenkovic also studied a partial generalization of graph weights \cite{puleo2015correlation}.

\subsubsection{The Pivot Algorithm} The Pivot algorithm was first introduced by Ailon et al. \cite{ailon2008aggregating}. Its efficient run time and ease of implementation have made it very popular, and it has been applied to many variants of correlation clustering that have arisen since. Recently, it has been used for uncertain graphs \cite{mandaglio2020and}, query-constrained correlation clustering \cite{garcia2020query}, online correlation clustering \cite{lattanzi2021robust}, chromatic correlation clustering \cite{klodt2021color}, and fair correlation clustering \cite{ahmadian2020fair}. It has also been shown how to run the Pivot algorithm in parallel in various settings \cite{chierichetti2014correlation,pan2015parallel}. Zuylen and Williamson \cite{van2009deterministic} developed a deterministic version of Pivot that picks a best pivot at each round, though at the cost of an increased running time complexity. The most efficient non-parallel implementation of Pivot uses a neighborhood oracle, where a hash table stores lists of neighbors for each node \cite{ailon2009correlation}.

\subsubsection{Consensus Clustering}

Though a problem of interest in its own right, consensus clustering has often been studied as a special case of correlation clustering \cite{gionis2007clustering, ailon2008aggregating, goder2008consensus}. An overview of consensus clustering methods is given by Vega-Pons and Ruiz-Shulcloper \cite{vega2011survey}.

\section{Weighted Correlation Clustering}\label{sec:weighted_cc}

Given a graph $G = (V, E)$, Ailon et al. \cite{ailon2008aggregating} considered a generalization of the correlation clustering problem by allowing every edge $(u, v)$ in $E$ to have a positive weight $w_{uv}^{+} \ge 0$ and negative weight $w_{uv}^{-} \ge 0$. A weighted correlation clustering instance $G$ has a corresponding unweighted \emph{majority} instance $G_{w}$; this is formed by adding edge $(i, j)$ to $E_{w}^{+}$ if $w_{ij}^{+} > w_{ij}^{-}$ and $(i, j)$ to $E_{w}^{-}$ if $w_{ij}^{-} > w_{ij}^{+}$ (breaking ties arbitrarily).

In this chapter we will assume the input graph satisfies the probability constraints $w_{uv}^{+} + w_{uv}^{-} = 1$. For edge $(u,v)$, we set $s(u,v) = w_{uv}^{+}$ and note that $1 - s(u,v) = w_{uv}^{-}$. The correlation clustering objective function is again given by 
$$\text{Cost}(\C, V) = \sum_{\substack{u, v \in V,\; u \ne v \\ (u,v) \text{ is intra-cluster}}} (1 - s(u,v)) + \sum_{\substack{u,v \in V,\; u \ne v \\ (u,v) \text{ is inter-cluster}}} s(u,v).$$
The standard correlation clustering problem arises when each $s(u,v)$ is equal to 0 or 1.

\subsection{Correlation Clustering Algorithms}\label{sec:weighted_cc_algs}

The Pivot algorithm runs directly on weighted graphs that satisfy the probability constraints \cite{kollios2011clustering}. The algorithm chooses a node $u$ at random, starts cluster $C = \{u\}$, and adds all other unclustered nodes $v$ to $C$ with weight $s(u,v) \ge 1/2$. It repeats on $V \setminus C$ until all nodes are clustered. The time complexity is $O(|V| + |E|)$, where $E$ is the complete set of edges between pairs of nodes in $V$. This can be improved to $O(|V| + |E^+|)$ where $E^+$ represents edges of weight $\ge 1/2$ if neighbor sets $N(v) = \{u \in V \mid s(u,v) \ge 1/2\}$ are known for all $v \in V$.

Ailon et al. showed that the Pivot algorithm yields a 5-approximation for the probability weights case. In the special case where the ``complements'' of these weights also satisfy a version of the triangle inequality (i.e. $1 - s(u,v) \le (1 - s(u,w)) + (1 - s(v,w))$ for $u, v, w \in V$), Ailon et al. further showed that the Pivot algorithm yields a 2-approximation.

Ailon et al. also presented a LP rounding method for weighted correlation clustering. We introduce a variable $x_{uv}$ for every pair of distinct nodes $u,v \in V$. We interpret $x_{uv} = 0$ to mean that $u$ and $v$ lie in the same cluster; $x_{uv} = 1$ means that $u$ and $v$ lie in different clusters. We assume $x_{uu} = 0$ always.
\begin{align}
\label{equation:lp_cc_weighted}
\begin{array}{lll}
\min & \sum\limits_{u,v \in V}[s(u,v) x_{uv} + (1 - s(u,v)) (1 - x_{uv})] & \\
\text{s. t.} & x_{uv} + x_{vw} \ge x_{uw} & \forall u, v, w \in V, \\
 & 0 \le x_{uv} \le 1 & \forall u, v \in V. 
\end{array}
\end{align}

Given a fractional solution to LP \ref{equation:lp_cc_weighted}, the rounding method from Ailon et al. yields a 2.5-approximation for the probability weights case, and a 2-approximation for probability weights that also satisfy the complement triangle inequality.

\subsection{Application to Consensus Clustering}

The input for consensus clustering is a set of clusterings $\C_{1} \dots, \C_{k}$ of a fixed set $V$. The goal is to output a new clustering $\C$ of $V$ that minimizes the \textit{disagreement distance} $\sum_{i = 1}^{k} \text{Disagree}(\C, \C_{i})$, where $\text{Disagree}(\C, \C_{i})$ is the number of node pairs $(i, j)$ that are clustered together in one clustering but not in the other.

Consensus clustering is closely related to weighted correlation clustering. We construct a graph by adding one node for every object in $V$, and then for every pair of objects $u, v \in V$ we add an edge $(u,v)$ with weight equal to the average number of clusterings where $u$ and $v$ appear in the same cluster. Constructed this way, the edge weights satisfy the complement triangle inequality \cite{ailon2008aggregating}.

The consensus clustering result is then given by the clustering formed by a correlation clustering algorithm on this weighted graph. The approximation bound of the correlation clustering algorithm equals the approximation bound for consensus clustering. Thus using either the Pivot or the LP rounding method presented in Section \ref{sec:weighted_cc_algs}, we get a 2-approximation algorithm for consensus clustering \cite{ailon2008aggregating}. From now on we will focus on the Pivot algorithm, since there are no theoretical gains from using the more expensive LP rounding method. In fact, the state-of-the-art method from Ailon el al. yields a 1.57-approximation by choosing the better clustering between the one produced by Pivot or choosing one of the input clusterings at random.

\section{Runtime Improvements}\label{sec:consensus_onthefly}

Previous implementations of the Pivot algoritm for consensus clustering have relied on precomputing the weighted graph used for these algorithms \cite{goder2008consensus}. For $k$ input clusterings on set $V$, precomputing all weights in the edge similarity graph requires $O(|V|^2 k)$ time and $O(|V|^2)$ space. Though correlation clustering algorithms like Pivot algorithm run fastest when these similarities are precomputed, the memory required to store edge weights quickly becomes unmanageable for larger graphs. Furthermore, the Pivot algorithm rarely uses all $O(|V|^2)$ edges during a single clustering run. Even if Pivot is run a handful of times, the extra cost of precomputing all possible edges still may not be justified. 

To achieve a run time improvement and to reduce overall memory usage, we can just compute and store cluster labels for each node for the $k$ input clusterings. A \textit{cluster label} is a $k$-tuple $(v_1, \dots, v_k)$ for a given node $v$, where $v_i$ is an integer denoting which cluster node $v$ is a member of in clustering $i$. This step runs in $O(|V|k)$ time and uses $O(|V|k)$ memory---the same amount of memory used by storing the input clusterings. Then computing the similarity between node $u$ and node $v$ is an $O(k)$ operation from counting the number of matching labels $v_i = u_i$. In practice the number of input clusterings is much smaller than $|V|$. By computing similarities ``on-the-fly'', the Pivot algorithm only incurs an extra factor of $k$ in its running time. 

We can use a similar approach for implementing other correlation clustering algorithms like LocalSearch \cite{gionis2007clustering} and Vote \cite{elsner2009bounding} by computing edge probabilities only as needed. This allows these algorithms to run on larger instances without requiring a great deal of extra memory to store edges. However, the run time gains of the ``on-the-fly'' approach is diminished for these algorithms if they are run multiple times, since both LocalSearch and Vote examine all graph edges during a single run. On smaller instances it is better to precompute edges in order to reuse them on multiple runs.

\section{Sampling Methods for Consensus Clustering}\label{sec:consensus_sampling}

Different sampling methods have been considered for correlation clustering \cite{bonchi2013local} and consensus clustering \cite{gionis2007clustering}, focusing on reducing the number of edge comparisons performed during clustering. Here we propose a method for consensus clustering to reduce the number of attributes used to calculate edge probabilities. 

\subsection{Pivot: Sampling Input Clusters}\label{sec:pivot_sampling}

For a large number $k$ of input clusterings, an additional factor of $O(k)$ in running time may not be practical. Let $R \in \{1, \dots, k\}$. We will analyze the effect of sampling only $R$ input clusters to compute edge probabilities.

For a given pair of nodes $i, j$, we can produce a $k$-bit string $s$ to represent which clusters they agree on. We set bit $l = 1$ if $i$ and $j$ are clustered together in clustering $l$, and 0 otherwise. Sampling a smaller number of clusterings is akin to drawing a smaller number of bits from this string. We will assume that $k$ is large so we can model these draws as sampling with replacement.

Let $p$ denote the true average number of clusters that $i$ and $j$ are clustered together in, and assume that $p \le 1/2$ (the opposite case will follow by symmetry). Let $X$ be the sum of $R$ randomly chosen bits of $s$. We model $X$ as a Binomial random variable with $R$ samples and success probability $p$. Since the Pivot algorithm will ``make a mistake'' when the sampled probability is $> 1/2$, we will first find the probability that $X > R/2$. 

To do this, we use the standard deviation $\sqrt{R p(1-p)}$ for $X$. Let $Z = (X - pR) / \sqrt{Rp(1-p)}$. We estimate $\PP(X > R/2)$ using
\begin{align*}
\PP(Z > (R/2 - pR) / \sqrt{R p(1-p)}) = \PP(Z > \sqrt{R}(1/2 - p) / \sqrt{p(1-p)}).  
\end{align*}
Let $f(R, p) = \sqrt{R}(1/2 - p) / \sqrt{p(1-p)}$. We find $\PP(Z > f(R, p))$ by evaluating $\text{Err}(R, p) := 1 - \Phi(f(R,p))$, where $\Phi$ is the normal CDF.

For $p < 1/2$, the cost of a correctly clustered edge is $p$. We are interested in how much this cost might increase due to an error in reading this probability from the sample. We will analyze the ``Node-at-a-Time'' approach to the Pivot algorithm.

As observed by Bonchi et al. \cite{bonchi2013local}, the Pivot can be adapted to run in $n$ rounds where a single node is clustered after each round. A permutation $\pi$ of the vertex set $V$ is fixed beforehand, and then nodes are processed in that order. A pivot node set $P$ is maintained. If an incoming node $v$ matches with pivot $i$ first, then $v$ is assigned cluster label $i$. If $v$ doesn't match with any nodes in $P$, then $v$ begins a new cluster and $v$ is added to $P$.

Given a permutation $\pi$ of $V$, we will break down what happens when Pivot ``makes a mistake'' into three cases:
\begin{enumerate}
    \item Node $v$ connects to the wrong pivot (but wasn't going to become a new pivot itself). In the worst case, the cost of all edges associated with $v$ will ``flip'' (becoming 1 - the original cost). This event happens with probability $\le \text{Err}(R, p)$, where $p$ is the true probability of the incorrectly sampled edge.
    
    \item Node $v$ connects to a pivot when it was supposed to become a new pivot. In this case we can create a new permutation $\pi '$ where node $v$ is sent to the back and all other nodes move up by one place. The cost increase is given in Case 1, but now relative to the new permutation $\pi'$ which also determines a Pivot clustering.
    
    \item Node $v$ becomes a new pivot when it was supposed to connect to an existing pivot. The cost incurred by node $v$ to already clustered nodes is given in Case 1. The cost incurred by subsequent nodes that connect to $v$ is given by Cases 1 and 2, depending on whether they should've connected to a different pivot or become a new pivot.
\end{enumerate}

In particular, we see that the expected cost due to error can be calculated based solely on edge probabilities. For a given $R$ and $p < 1/2$, we calculate the expected cost due to error as
\[p \cdot (1 - \text{Err}(R, p)) + (1 - p) \cdot \text{Err}(R, p).\]
(If $p \ge 1/2$ for a given edge, we replace $p$ with $1-p$ in the formula).

Fix a value for $R$. We want to find the maximum multiple of the original cost incurred from the sample. Define $g(R)$ to be
\[g(R) = \underset{p \in [0, 1/2]}{\max} \{ [p \cdot (1 - \text{Err}(R, p)) + (1 - p) \cdot \text{Err}(R, p)] / p\}.  \]


Since the correlation clustering objective sums the cost of edges, and $g(R)$ is the multiple of the maximum expected increase of cost due to an error in sampling, we have the following

\begin{lemma}\label{lem:pivot_sampling}
For $R < k$, construct the consensus clustering similarity graph with $R$ randomly sampled input clusterings. The Pivot algorithm yields an expected $g(R) \cdot 2$-approximation when compared with the true similarity graph.
\end{lemma}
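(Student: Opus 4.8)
The plan is to treat the $2$-approximation of Pivot on the true similarity graph (Ailon et al., recalled in Section~\ref{sec:weighted_cc_algs}) as a black box and to show that replacing the true edge weights by the $R$-sampled weights inflates the expected cost of the clustering Pivot returns by a factor of at most $g(R)$. Let $\SOL_{\mathrm{t}}$ be the (random) clustering produced by the node-at-a-time Pivot on the true graph and $\SOL_{\mathrm{s}}$ the one produced on the $R$-sampled graph, both scored with the true objective $\Cost(\cdot, V)$; the expectations below are over the random permutation $\pi$ and, for $\SOL_{\mathrm{s}}$, over the independent choice of $R$ bit positions for each edge. The target chain is
\[
\E\bigl[\Cost(\SOL_{\mathrm{s}}, V)\bigr] \;\le\; g(R)\cdot\E\bigl[\Cost(\SOL_{\mathrm{t}}, V)\bigr] \;\le\; g(R)\cdot 2\,\OPT,
\]
where $\OPT$ is the optimal consensus cost (equivalently, the optimum of the true graph). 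The second inequality is exactly the known bound, so the whole content is the first one.

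First I would isolate the single-edge estimate that defines $g(R)$. Consider a pair $(i,j)$ with true weight $s(i,j) = p$; assume $p \le 1/2$ (the case $p \ge 1/2$ is symmetric, replacing $p$ by $1-p$). Whenever the sampled weight of some edge incident to a pivot falls on the wrong side of $1/2$, Pivot's local decision flips, and in the worst case every edge affected by that flip has its true contribution $c$ replaced by $1-c$. For the edge $(i,j)$ this means: with probability at least $1 - \text{Err}(R,p)$ Pivot still charges the value $p$ it would have charged in the true run, and with probability at most $\text{Err}(R,p)$ it charges $1-p$, so the expected contribution of $(i,j)$ in the sampled run is at most
\[
p\cdot\bigl(1 - \text{Err}(R,p)\bigr) + (1-p)\cdot\text{Err}(R,p) \;\le\; g(R)\cdot p,
\]
by the very definition of $g(R)$. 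Thus, edge by edge, the expected sampled contribution is at most $g(R)$ times the true-run contribution, \emph{provided} each affected edge in the sampled run can be matched to the value Pivot pays for it in the true run.

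The substance of the proof is carrying out that matching, which is what the three-case breakdown is for. I would couple the two runs through the common permutation $\pi$ and examine the first node $v$ at which the sampled run deviates. In Case~1 ($v$ attaches to the wrong pivot but would not have opened a cluster), all edges incident to $v$ may flip, but this happens only with probability at most $\text{Err}(R, p)$ for the mis-sampled edge of true weight $p$, and one restarts the coupling from $v$'s successor. In Case~2 ($v$ opens no cluster when it should have become a pivot) I would pass to the permutation $\pi'$ obtained by moving $v$ to the end: $\pi'$ has the same distribution as $\pi$, and against $\pi'$ the deviation is of Case~1 type, so the bound transfers. In Case~3 ($v$ becomes a spurious pivot) the edges from $v$ to already-clustered nodes are handled as in Case~1, and every later node that attaches to $v$ spawns a fresh Case~1 or Case~2 instance on a strictly shorter suffix of the permutation; the recursion therefore terminates, and summing the per-edge bounds from the previous paragraph with linearity of expectation yields the first inequality.

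The main obstacle is making the Case~3 recursion airtight: one must verify that the edges charged inside the nested Case~1/Case~2 instances are disjoint from those charged earlier, so that no cost unit is counted twice, and that the per-edge blow-up stays at most $g(R)$ after the re-permutations of Case~2 --- which it does, since $g(R)$ was defined as a maximum over all $p \in [0,1/2]$ and a re-permutation changes no edge's true weight. A secondary caveat worth flagging is that $\text{Err}(R,p)$ is defined through the normal approximation to the Binomial tail $\PP(X > R/2)$, so the bound is exact only in the large-$R$ (central-limit) regime; a fully rigorous finite-$R$ version would replace $\Phi$ by a Chernoff/Hoeffding tail bound, which only decreases $\text{Err}(R,p)$ and hence $g(R)$, leaving the statement intact.
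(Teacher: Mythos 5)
Your proposal takes essentially the same route as the paper: the per-edge bound $p\,(1-\text{Err}(R,p)) + (1-p)\,\text{Err}(R,p) \le g(R)\cdot p$, the three-case coupling of the sampled and true runs over a shared permutation (including the Case-2 trick of moving $v$ to the back), and then linearity of expectation combined with the known factor-2 guarantee for Pivot on the true consensus graph — in fact you articulate the coupling and its potential double-counting issues more carefully than the paper, which passes from the per-edge estimate to the lemma almost immediately. The only slip is your closing aside that replacing the normal approximation by a Chernoff/Hoeffding tail would only decrease $\text{Err}(R,p)$: such bounds are upper bounds on the binomial tail and generally exceed the normal estimate (e.g.\ $1-\Phi\bigl(\sqrt{R}(1/2-p)/\sqrt{p(1-p)}\bigr) \le e^{-2R(1/2-p)^2}$ since $p(1-p)\le 1/4$), so a rigorous finite-$R$ treatment would if anything enlarge $g(R)$, though this side remark does not affect the main argument.
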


\subsection{Pivot Sampling Approximation Results}

The simplest algorithm for consensus clustering just picks an input clustering a random and returns it. This is noted to be a 2-approximation algorithm \cite{ailon2008aggregating}, so doing the extra work of computing edge probabilities and running the Pivot algorithm does not provide any theoretical benefit. Ailon et al. instead analyzed the following approach: run both of these algorithms, and return the result with lower cost. 

Let $t = (i, j, k)$ be a triangle of nodes. Let $w(t)$ be the worst-case cost clustering cost of a bad triangle $t$ from the Pivot algorithm, $z(t)$ be the cost of $t$ in the randomly chosen input clustering, and $c^{*}(t)$ be the optimal clustering cost of $t$. Ailon et al. proved the following 
\begin{theorem}
If there exist constants $\beta \in [0, 1]$ and $\gamma \ge 1$ such that $\beta w(t) + (1 - \beta) z(t) \le \gamma c^{*}(t)$ for all $t \in T$, then the best of Pivot and the random input clustering yields a $\gamma$-approximation for consensus clustering.
\end{theorem}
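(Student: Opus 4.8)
The plan is to lift the stated triangle-by-triangle inequality to a global one using the triangle-charging analysis behind the Pivot algorithm, together with the elementary observation that the ``best of two'' output is dominated by any convex combination of the two costs.

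First, fix a run of the combined algorithm. Since it outputs whichever of the Pivot clustering and the randomly chosen input clustering has smaller cost, and since $\min\{a,b\} \le \beta a + (1-\beta)b$ for any $\beta \in [0,1]$, we have $\Cost(\SOL) \le \beta\,\Cost(\text{Pivot}) + (1-\beta)\,\Cost(\C_{\text{rand}})$ for every realization. Averaging over the random pivot permutation and over the (independent) uniform choice of input clustering,
\[ \E[\Cost(\SOL)] \le \beta\,\E[\Cost(\text{Pivot})] + (1-\beta)\,\E[\Cost(\C_{\text{rand}})]. \]
Because approximation ratios are unchanged by the global scaling between the consensus-clustering objective and the weighted correlation-clustering cost, it suffices to bound the right-hand side against $c^* = \Cost(\OPT)$.

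Next I would recall the charging decomposition underlying the Pivot analysis. Running Pivot on a uniformly random permutation, each vertex pair $\{i,j\}$ is ``resolved'' by the first pivot adjacent to it; when that pivot is a third vertex $\ell$, the cost Pivot pays on $\{i,j\}$ is exactly the cost it incurs on the triangle $t=(i,j,\ell)$ for that pivot. Summing over pairs and taking expectations, the expected Pivot cost is a weighted sum over bad triangles, $\E[\Cost(\text{Pivot})] = \sum_{t \in \T} p_t\,\bar w(t) \le \sum_{t \in \T} p_t\,w(t)$, where $p_t$ is the probability that $t$ is the triangle charged and $\bar w(t) \le w(t)$ is the conditional expected cost, bounded by the worst case over which vertex of $t$ is the pivot. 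The point is that the \emph{same} weights $p_t$ control the other two quantities: the cost of a fixed clustering re-expresses over the same triangles, giving $\E[\Cost(\C_{\text{rand}})] = \sum_{t \in \T} p_t\,\E[z(t)]$, and a fractional-packing / LP-duality argument shows these weights also furnish a lower bound on the optimum, $c^* \ge \sum_{t \in \T} p_t\,c^*(t)$. (If the decomposition carries a common additive ``baseline'' paid identically by every clustering, it is separated out first; I suppress it here.)

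Finally, combine the three facts. Pushing the expectation over the random input clustering inside and applying the hypothesis $\beta\,w(t) + (1-\beta)\,z(t) \le \gamma\,c^*(t)$ per triangle and per realization,
\[ \E[\Cost(\SOL)] \le \sum_{t \in \T} p_t\big(\beta\, w(t) + (1-\beta)\,\E[z(t)]\big) \le \gamma \sum_{t \in \T} p_t\, c^*(t) \le \gamma\, c^*, \]
the last step using $\gamma \ge 1$ if a baseline term is present. This gives the claimed $\gamma$-approximation in expectation. The main obstacle is the third paragraph: showing that one and the same family $\{p_t\}$ simultaneously over-estimates the expected Pivot cost, re-expresses the expected cost of the randomly chosen input clustering (whose cost is \emph{a priori} a sum over pairs, not triangles), and under-estimates $\OPT$. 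The OPT lower bound is the delicate piece --- it is where the probabilities $p_t$ must be recognized as a feasible dual solution to the natural LP relaxation of weighted correlation clustering --- whereas the Pivot and random-clustering decompositions are careful but routine accounting.
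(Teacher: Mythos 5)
First, a point of comparison: the paper itself gives no proof of this theorem --- it is quoted from Ailon et al.\ \cite{ailon2008aggregating}, and only the follow-up lemma (the explicit $\beta=3/7$, $\gamma=11/7$ computation) is sketched. So your attempt has to be measured against the Ailon--Charikar--Newman argument that the citation points to, and your outline does reproduce its skeleton: $\min\{a,b\}\le \beta a+(1-\beta)b$, a triangle-charging decomposition of the expected Pivot cost, a re-expression of the expected cost of a randomly chosen input clustering over the same triangles, and a packing-style lower bound on the optimum.

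As a proof, however, there is a genuine gap, and it sits exactly where you flag it. The three facts in your third paragraph --- that a single family $\{p_t\}$ satisfies $\E[\Cost(\text{Pivot})]\le\sum_t p_t\,w(t)$, $\E[\Cost(\C_{\text{rand}})]=\sum_t p_t\,z(t)$ (note $z(t)$ is already the expectation over the random input choice), and $\sum_t p_t\,c^*(t)\le \Cost(\OPT)$ --- are the entire content of the theorem; labeling them ``routine accounting'' and ``a fractional-packing / LP-duality argument'' leaves the theorem unproved. Worse, one of the asserted identities is false as stated: a pair $(u,v)$ can be decided \emph{directly}, when $u$ or $v$ is itself the pivot, and such pairs are charged to no triangle, so $\{p_t\}$ alone does not re-express the cost of a fixed clustering. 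That is precisely the ``baseline'' you suppress, and it cannot be absorbed by invoking $\gamma\ge 1$: on a directly decided pair Pivot pays $\min\{s,1-s\}$, which is at most the optimum's payment there, but the random input clustering pays $2s(1-s)$ in expectation, which can approach twice the optimum's payment on that pair; so the convex combination restricted to these pairs needs its own short per-pair comparison (as in Ailon et al.), not a consequence of the triangle hypothesis or of $\gamma\ge 1$. In short: right strategy, matching the cited argument, but the load-bearing steps are postulated rather than proved, and the baseline handling as written would not go through.
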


To find the particular value of $\gamma = 11/7$ using the Pivot algorithm, Ailon et al. show the following

\begin{lemma}
For all $t \in T$, $(3/7) w(t) + (4/7) z(t) \le (11/7) c^{*}(t)$.
\end{lemma}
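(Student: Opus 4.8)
The plan is to establish the inequality one triple at a time, by reducing the three quantities it involves to explicit functions of the three edge similarities $a = s(i,j)$, $b = s(i,k)$, $c = s(j,k) \in [0,1]$ of the triangle $t = \{i,j,k\}$. Because the consensus similarity graph satisfies the complement triangle inequality (Section~\ref{sec:weighted_cc}), the triple $(a,b,c)$ is confined to the polytope $D = \{(a,b,c) \in [0,1]^3 : a+b \le 1+c,\ a+c \le 1+b,\ b+c \le 1+a\}$, and this constraint will be essential. I would first write $c^{*}(t)$ as the minimum, over the five partitions of a three-element set, of the induced correlation-clustering cost on the three edges of $t$; each of these five costs is affine in $(a,b,c)$ (for example the all-together partition costs $(1-a)+(1-b)+(1-c)$, and the partition $\{i,j\},\{k\}$ costs $(1-a)+b+c$), so $c^{*}$ is a concave, piecewise-affine function on $D$. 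Next, $z(t)$ --- the cost of $t$ in the randomly chosen input clustering --- has expectation $2\bigl(a(1-a)+b(1-b)+c(1-c)\bigr)$, since a uniformly random input clustering co-clusters a pair $u,v$ with probability exactly $s(u,v)$ and the triangle cost is additive over its three edges.

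The remaining ingredient is $w(t)$. Using the node-at-a-time view of Pivot together with the Ailon--Charikar--Newman charging scheme, the cost that Pivot can be forced to incur ``at'' the triple $t$ is governed by which vertex of $t$ is selected first as a pivot while the other two are still unclustered, and by whether the similarities clear the $1/2$ threshold in the majority instance $G_{w}$; $w(t)$ is the worst case of this forced cost over the three pivot choices (and over the tie-breaking in $G_{w}$). Each forced cost is one of $0$, $a$, $1-a$, etc.\ on the regions cut out by the hyperplanes $a = 1/2$, $b = 1/2$, $c = 1/2$, so $w(t)$ is piecewise-affine in $(a,b,c)$, affine on each cell of a suitable refinement of that subdivision.

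With $c^{*}$, $w$, and $z$ written out this way, the lemma becomes the assertion that $3\,w(t) + 4\,z(t) \le 11\,c^{*}(t)$ throughout $D$. I would partition $D$ into the finitely many cells on which $w$ is affine and a fixed partition attains the minimum defining $c^{*}$; on each such cell the quantity $3w(t) + 4z(t) - 11c^{*}(t)$ is an affine function plus the concave quadratic $4z(t)$, so its maximum over the cell is attained either at an interior critical point or on a lower-dimensional face and can be checked directly. Running through the cells and invoking the inequalities defining $D$ to discard the ``bad triangle'' configurations that would otherwise drive the ratio up to $2$, one finds the bound holds, with equality on the worst cell --- which is exactly why the constants come out to $\beta = 3/7$ and $\gamma = 11/7$, matching the $1.57$-approximation quoted earlier.

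The step I expect to be the real work is that final case analysis: the difficulty is not a single clever trick but the bookkeeping of making sure the complement-triangle-inequality constraints are actually used in every cell where they are needed, since it is precisely those constraints that separate the $11/7$ bound here from the $2$ bound valid for arbitrary probability weights. A secondary nuisance is that $z(t)$ is concave rather than affine, so the per-cell optimization genuinely requires handling the quadratic term and cannot be reduced to checking the vertices of each cell.
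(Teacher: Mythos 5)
Your setup is the right one, and it matches the paper's (which simply reproduces the Ailon et al.\ argument): express $w(t)$, $z(t)$, and $c^{*}(t)$ as explicit functions of the three edge similarities, restrict to the region carved out by the complement triangle inequality, and show the combination $3w(t)+4z(t)-11c^{*}(t)$ is nonpositive there. But the proposal stops exactly where the proof begins. You never write down $w(t)$ explicitly (only that it is ``one of $0$, $a$, $1-a$, etc.''), never fix which partition realizes $c^{*}(t)$ on the relevant region, and the decisive step --- ``running through the cells \dots one finds the bound holds, with equality on the worst cell'' --- is an assertion of the lemma, not an argument for it. Since the whole content of the lemma is quantitative (why $11/7$ and not, say, $3/2$ or $2$), deferring the optimization is a genuine gap: nothing in what you wrote rules out a cell where the concave function $3w+4z-11c^{*}$ has a positive interior maximum, and identifying the tight configuration is precisely the work.

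For comparison, the paper avoids your multi-cell bookkeeping entirely: because the lemma is only needed for bad triangles of the majority instance, one can normalize so that $w_{1}=1-s$ on the minority edge and $w_{2},w_{3}$ are the two majority similarities, giving the single region $1/2 \le w_{1} \le w_{j} \le 1$ ($j=2,3$) with the complement triangle inequality appearing as $w_{1}+w_{2}+w_{3}\le 2$, and the closed forms $w(t)=w_{1}+w_{2}+w_{3}$, $z(t)=2\sum_{i} w_{i}(1-w_{i})$, $c^{*}(t)=w_{1}+(1-w_{2})+(1-w_{3})$. A single constrained maximization of $f(t)=(3/7)w(t)+(4/7)z(t)-(11/7)c^{*}(t)$ then shows the global maximum is $0$, attained at $(w_{1},w_{2},w_{3})=(1/2,3/4,3/4)$, which both proves the inequality and shows it is tight. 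To complete your version you would need to supply exactly this kind of explicit computation (on your cells, using the polytope $D$ where needed), including locating the maximizer and checking its value.
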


\begin{proof}
To prove this, they show that
\[f(t) = (3/7) w(t) + (4/7) z(t) - (11/7) c^{*}(t) \le 0\]
where $t = (w_{1}, w_{2}, w_{3})$ and 
\begin{align*}
    w(t) &= w_{1} + w_{2} + w_{3} \\
    z(t) &= 2w_{1}(1 - w_{1}) + 2w_{2}(1 - w_{2}) + 2w_{3}(1 - w_{3}) \\
    c^{*}(t) &= w_{1} + 1 - w_{2} + 1 - w_{3} \\
    1/2 &\le w_{1} \le w_{j} \le 1 \text{ for } j = 2, 3 \\
    w_{1} &+ w_{2} + w_{3} \le 2
\end{align*}
They find a global maximum of $f(t)$ within the constrained area at $(w_{1}, w_{2}, w_{3}) = (1/2, 3/4, 3/4)$. Here $w(t) = 2$, $z(t) = 5/4$, and $c^{*}(t) = 1$, which yields $f(t) = 0$. So this bound is tight. 
\end{proof}

Using \lemref{lem:pivot_sampling}, we replace $w(t)$ with $g(R) \cdot w(t)$ for Pivot on $R$ inputs. Since $w(t)$ and $c^{*}(t)$ are linear functions, we can adjust their scalar multiples in $f(t)$ without affecting the location of the maximum value. Evaluating at $t^{*} = (w_{1}, w_{2}, w_{3}) = (1/2, 3/4, 3/4)$, we get
\begin{align*}
(3/7) &g(R) w(t^{*}) + (4/7) z(t^{*}) - \gamma c^{*}(t^{*}) \\
&= (3/7)(2) g(R) + (4/7)(5/4) - \gamma = (6/7) g(R) + 5/7 - \gamma.
\end{align*}
This is $\le 0$ when $\gamma \ge (6/7) g(R) + 5/7$. This yields the following
\begin{theorem}
The best of Pivot on $R$ input clusterings and picking a random input clustering yields a $[(6/7) g(R) + 5/7]$-approximation algorithm for consensus clustering.
\end{theorem}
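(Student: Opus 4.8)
The plan is to apply the meta-theorem of Ailon et al.\ (the first Theorem stated above) with $\beta = 3/7$, feeding in the \emph{inflated} worst-case bad-triangle cost $g(R)\,w(t)$ supplied by \lemref{lem:pivot_sampling} in place of $w(t)$. Concretely, I would establish that for every bad triangle $t\in T$,
\[
(3/7)\,g(R)\,w(t) + (4/7)\,z(t) \;\le\; \gamma\,c^{*}(t), \qquad \gamma := (6/7)\,g(R) + 5/7,
\]
and then verify the hypotheses $\beta = 3/7\in[0,1]$ and $\gamma\ge 1$. The latter holds because $g(R)\ge 1$ (reading an edge probability off a sub-sample can only inflate Pivot's cost; one can also check directly that the ratio in the definition of $g(R)$ equals $1$ at $p\to 0$ and at $p=1/2$), so in fact $\gamma\ge 11/7$. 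Granted the displayed inequality, the meta-theorem immediately delivers the claimed $[(6/7)g(R)+5/7]$-approximation, so the whole proof reduces to that inequality.

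To prove it I would work on the same feasible polytope used in the $11/7$ analysis, namely $1/2 \le w_1 \le w_j \le 1$ for $j=2,3$ together with $w_1+w_2+w_3\le 2$, where $w(t)=w_1+w_2+w_3$, $z(t)=\sum_i 2w_i(1-w_i)$, and $c^{*}(t)=w_1+(1-w_2)+(1-w_3)$. Set
\[
f_1(t) = (3/7)g(R)w(t) + (4/7)z(t) - \gamma c^{*}(t), \qquad
f_0(t) = (3/7)w(t) + (4/7)z(t) - (11/7)c^{*}(t).
\]
By the $11/7$ lemma, $f_0\le 0$ on the polytope, with equality only at the vertex $t^{*}=(1/2,3/4,3/4)$, where $w(t^{*})=2$, $z(t^{*})=5/4$, $c^{*}(t^{*})=1$. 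Writing
\[
f_1(t) = f_0(t) + \tfrac{3}{7}\big(g(R)-1\big)\,w(t) + \big(\tfrac{11}{7}-\gamma\big)\,c^{*}(t),
\]
I would bound the two extra affine terms using the extremes of $w$ and $c^{*}$ on the polytope: $w(t)\le 2$ everywhere (this is exactly the constraint $w_1+w_2+w_3\le 2$), and $c^{*}(t)=w_1-w_2-w_3+2\ge 1$ everywhere, both extremes attained precisely at $t^{*}$. Since $g(R)\ge 1$ and $\gamma\ge 11/7$, the coefficient of $w$ is nonnegative and that of $c^{*}$ is nonpositive, whence $f_1(t)\le 0 + (3/7)(g(R)-1)\cdot 2 + (11/7-\gamma)\cdot 1 = (6/7)g(R) + 5/7 - \gamma = 0$. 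Evaluating at $t^{*}$ shows equality, so the constant $(6/7)g(R)+5/7$ cannot be improved by this method, which matches the computation in the surrounding text.

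This decomposition is also what rigorously backs up the informal remark that rescaling the linear pieces "does not move the maximum": the nonlinear part $(4/7)z(t)$ is untouched, and each of $f_0$, the rescaled $w$-term, and the rescaled $c^{*}$-term is individually maximized at $t^{*}$, so the vertex that was optimal for $f_0$ stays optimal for $f_1$. The one genuine obstacle is exactly this point, and I would flag it as such: the blanket claim "$w$ and $c^{*}$ linear $\Rightarrow$ maximizer unchanged" is false for a general concave objective (perturbing the affine part shifts the critical point of the concave part), so the argument must instead exploit the specific geometry here, namely that on this particular polytope $w$ and $c^{*}$ attain their relevant extremes at the very vertex $t^{*}$ where $f_0$ is already tight. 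Everything else is bookkeeping plus one invocation each of \lemref{lem:pivot_sampling} and the Ailon et al.\ meta-theorem.
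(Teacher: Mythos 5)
Your proposal is correct, and it reaches the bound by the same high-level route as the paper: invoke the Ailon et al.\ meta-theorem with $\beta = 3/7$, replace $w(t)$ by $g(R)\,w(t)$ via \lemref{lem:pivot_sampling}, work on the same polytope, and observe tightness at $t^{*}=(1/2,3/4,3/4)$. Where you genuinely differ is in how the key inequality $(3/7)g(R)w(t)+(4/7)z(t)\le\gamma c^{*}(t)$ is justified. The paper simply asserts that rescaling the linear terms $w(t)$ and $c^{*}(t)$ ``does not move the maximum'' of $f(t)$ and evaluates at $t^{*}$; as you correctly point out, that blanket claim is not a valid general principle (an affine perturbation can shift the maximizer of a concave objective), so the paper's justification is informal at best. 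Your decomposition $f_1 = f_0 + \tfrac{3}{7}(g(R)-1)\,w + (\tfrac{11}{7}-\gamma)\,c^{*}$, bounded termwise using $f_0\le 0$ (the $11/7$ lemma), $w(t)\le 2$ (the polytope constraint), $c^{*}(t) = 2 + w_1 - w_2 - w_3 \ge 2w_1 \ge 1$, $g(R)\ge 1$, and $\gamma\ge 11/7$, gives a rigorous proof of exactly the inequality the paper needs and recovers equality at $t^{*}$, so it both fills the gap and confirms the constant cannot be improved by this method. One small nit: the extremes $w=2$ and $c^{*}=1$ are not attained \emph{only} at $t^{*}$ (e.g.\ $(1/2,1/2,1)$ also attains both), but your argument never needs uniqueness---only the three inequalities, which hold throughout the feasible region---so this does not affect correctness. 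You also explicitly verify the meta-theorem's hypotheses ($\beta\in[0,1]$, $\gamma\ge 1$ via $g(R)\ge 1$), which the paper leaves implicit.
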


The following graphs (Figure \ref{fig:consensus_bounds}) plot the consensus clustering approximation results for the Pivot algorithm. The graph on the left shows the value of the function $g(R)$ where $R$ is the number of samples used from the set of input clusterings; the baseline ``full'' is the constant value 1. The graph on the right shows the approximation bound based on $R$ samples, compared against the 11/7 baseline for sampling all inputs.
\begin{figure}[H]
    \centering
    \includegraphics[width=0.49\linewidth]{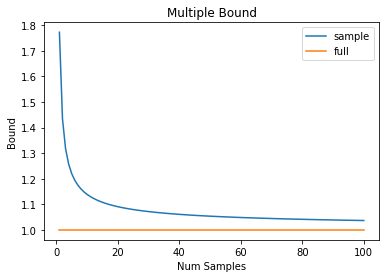}
    \includegraphics[width=0.49\linewidth]{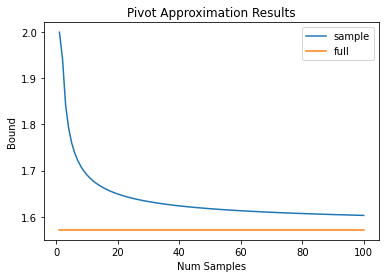}
    \caption{Consensus Clustering Theoretical Analysis}
    \label{fig:consensus_bounds}
\end{figure}

\section{Experiments}\label{sec:consensus_experiments}

All algorithms in this section were implemented in Java\footnote{code available at \texttt{github.com/nathan-cordner/cc-local-search}} and tested on a Linux server running Rocky Linux 8.7 with a 2.9 GHz processor and 16.2 GB of RAM. Plots show the mean result of 10 runs for each algorithm. Error bars show one standard deviation of the mean.

We compare the results of Pivot, Pivot with InnerLocalSearch (ILS---a LocalSearch method that only runs \textit{inside} individual Pivot clusters), Pivot with LocalSearch (LS), and Vote algorithms on each data set. Each algorithm computes similarities as needed using the approach outlined in Section \ref{sec:consensus_onthefly}. We restrict LocalSearch and InnerLocalSearch to just one iteration through the node set, since multiple iterations would require computing previously seen edge probabilities over again.

\subsection{Mushrooms Data}

The Mushrooms\footnote{\texttt{archive.ics.uci.edu/ml/datasets/mushroom}} dataset contains 8214 rows with 22 categorical attributes each (the original data contains 23 columns; the first column, which contains a classification into ``poisonous'' or ``edible'', is removed). We interpret each attribute as an input clustering: for a given attribute, all items with the same attribute value are put into one cluster. The Mushrooms data set has often been used for comparison in correlation \cite{garcia2020query} and consensus \cite{goder2008consensus} clustering experiments. 

\begin{figure}
    \centering
    \includegraphics[width=0.32\linewidth]{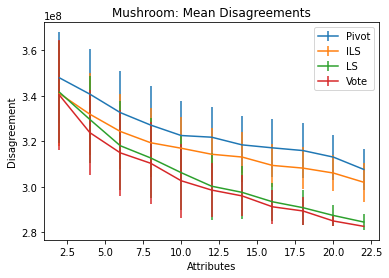}
    \includegraphics[width=0.32\linewidth]{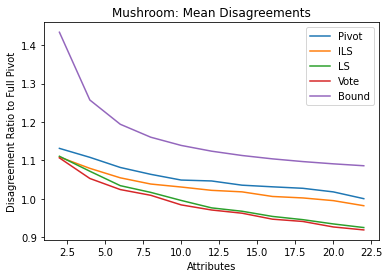}
    \includegraphics[width=0.32\linewidth]{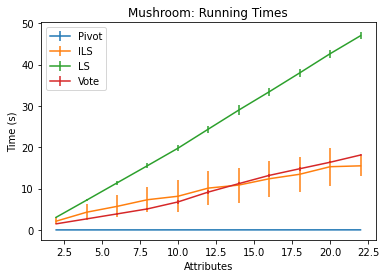}
    \caption{Mushroom Consensus Clustering}
    \label{fig:mushroom}
\end{figure}

We first compare the difference in running times of running Pivot on the fly versus precomputing all edge weights. Goder and Filkov \cite{goder2008consensus} used the precomputed version of Pivot for consensus clustering. Their experiments on the Mushrooms dataset yielded an average runtime of 1222 seconds, which they noted was ``dominated by the preprocessing.'' Our own time for preprocessing of the Mushrooms graph clocked in at 57.59 seconds, with an average run of Pivot being 0.0082 seconds afterward (out of 10 runs). However, running Pivot-on-the-Fly we computed the cluster labels in 0.029 seconds and had an average Pivot runtime of  0.0129 seconds afterward. Though each run of Pivot was marginally slower, we significantly reduce overall running time by not precomputing the entire similarity graph. For larger examples, storing precomputed edges becomes entirely infeasible.

In Figure \ref{fig:mushroom} we show the results of attribute sampling, where each algorithm computes edge probabilities as needed. The Pivot algorithm was run 50 times for each $R \in \{2, 4, \dots, 22\}$, where $R$ determines the number of randomly chosen input clusterings to use in the algorithm (with final disagreements computed against the full set of input clusterings). 

We first note the theoretical claims of Pivot sampling. For example, using only $R = 2$ inputs, our bound suggests that the Pivot algorithm should return a result less than 1.434 times the disagreement on the full set of inputs. Here we see that the mean Pivot clustering disagreement at $R = 2$ is only 1.131 times the disagreement for the full set---well below the theoretical bound. 

For the LocalSearch disagreement improvements, we note that both LS and ILS give similar results at $R = 2$ and begin to diverge as $R$ increases. The effectiveness of ILS diminishes as $R$ increases, whereas the full LS maintains its improvement levels. At $R = 22$ we see the largest gap between the two methods: the LS disagreement is 92.5\% of the Pivot level, whereas ILS is only 98.2\% of Pivot.

However, the running time graph is able to balance out the picture. For this graph, the Pivot running time is nearly instantaneous even as $R$ increases. The running time of the full LS increases the most dramatically, starting around 5 seconds for $R = 2$ and ending at nearly 50 seconds for a single pass over the data set when $R = 22$; note that the time used by LS for the full set of attributes is comparable to computing the full similarity set. The ILS method is much more efficient, clocking in at about 15 seconds for $R = 22$. LS takes over three times the amount of time as ILS, but for only a small improvement over the results of ILS.

We note that the Mushrooms data set is not ideal for ILS. The Pivot algorithm runs quickly since it forms a small number of clusters (about 10 for $R = 22$), which leads to larger cluster sizes (average max cluster size of about 3500 for $R = 22$). Thus ILS is not able to reduce the running time over LS as effectively, though we still see some improvements in disagreement cost and running time on this example.

The Vote algorithm performs quite well on the Mushrooms data set. Vote does not need to compute as many similarities as the full LocalSearch, so it runs much more efficiently (on the same level as ILS for this example). Vote also yields the greatest level of improvement over Pivot on all attribute samplings, edging out the improvements yielded by LS.

\subsection{Facebook Data} 

\begin{figure}
    \centering
    \includegraphics[width=0.32\linewidth]{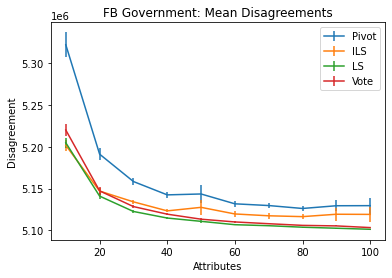}
    \includegraphics[width=0.32\linewidth]{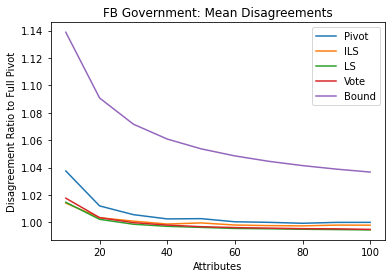}
    \includegraphics[width=0.32\linewidth]{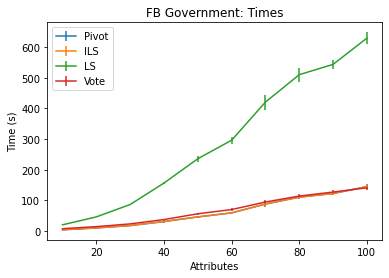}
    \caption{Facebook Government Consensus Clustering}
    \label{fig:fb_gov}
\end{figure}

The Facebook Government\footnote{\texttt{snap.stanford.edu/data/gemsec-Facebook.html}} graph is one of several different social networks between similarly themed Facebook pages. Nodes in the graph represent individual pages, and edges between nodes represent mutual likes between pages. The Government graph has 7,057 nodes and 89,455 edges. To perform consensus clustering, we first generate 100 Pivot clusterings and then run the consensus clustering algorithms on the results.

Figure \ref{fig:fb_gov} shows the disagreement cost and running time results of consensus clustering from our four algorithms. The Pivot algorithm was run 10 times for each of $R \in \{10, 20, \dots, 100\}$. 

Our theoretical approximation bound on $R = 10$ inputs suggests that the Pivot algorithm should return a result less than 1.139 times the disagreement on the full set. Here we see that the mean Pivot clustering disagreement at $R = 10$ is only 1.038 times the disagreement for the full set. For the LocalSearch disagreement improvements, we note that both LS and ILS give similar results at $R = 10$ and begin to diverge slightly as $R$ increases. At $R = 100$ however, the LS disagreement is only 99.45\% of Pivot with ILS being 99.8\% of Pivot. 

For running times, we see that the Pivot algorithm's time increases more dramatically as $R$ increases---starting with iterations under 5 seconds when $R = 10$, up to iterations of over 2 minutes when $R = 100$. This happens since the max cluster size $d$ of the Pivot algorithm stays small (under 50), and also means that the ILS algorithm runs extremely quickly with iterations lasting about a tenth of second even for $R = 100$. The LS algorithm, on the other hand, begins to perform quite poorly even on a small number of input clusterings. By $R = 50$ the LS iterations clock in at about 4 minutes, and are over 10 minutes at $R = 100$. This graph is small enough to precompute the entire similarity graph, which only takes about 5 minutes to complete. Since LS runs much longer than ILS, and does not give much added value over ILS in terms of disagreements reduced, it makes sense to favor ILS over LS for consensus clustering improvement.

Once again, the Vote algorithm performs relatively well on this graph. Since Pivot forms a large number of clusters, it has to compute most of the similarities between pairs of nodes in the graph. Thus the Vote algorithm runs in time comparable to Pivot and ILS (taking only slightly longer). We also see that Vote performs nearly as well as LS, giving the most significant improvement in disagreement costs with the least additional time required beyond Pivot.

\subsection{Correlated Binary Data}

\begin{figure*}
\caption{Correlated Binary Data Consensus Clustering}
\label{fig:binary_consensus}
\centering
  \begin{subfigure}[b]{0.32\linewidth}
  \centering 
  \includegraphics[width=\linewidth]{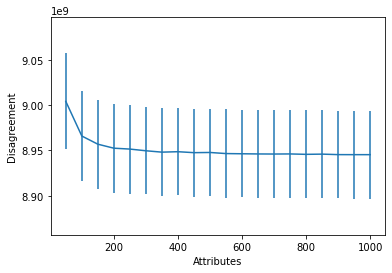}
  \caption{0.1 Mean 0.1 Correlation}
  \end{subfigure}
  \begin{subfigure}[b]{0.32\linewidth}
  \centering 
  \includegraphics[width=\linewidth]{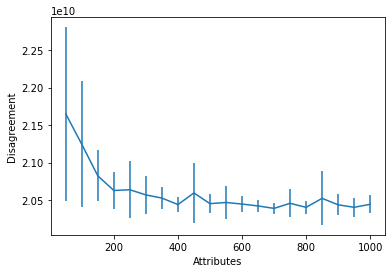}
  \caption{0.3 Mean 0.1 Correlation}
  \end{subfigure}
  \begin{subfigure}[b]{0.32\linewidth}
  \centering 
  \includegraphics[width=\linewidth]{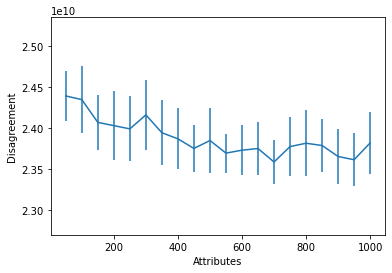}
  \caption{0.5 Mean 0.1 Correlation}
  \end{subfigure}
  \vspace{-6 pt}
  \begin{subfigure}[b]{0.32\linewidth}
  \centering 
  \includegraphics[width=\linewidth]{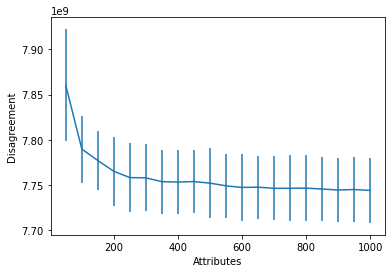}
  \caption{0.1 Mean 0.3 Correlation}
  \end{subfigure}
  \begin{subfigure}[b]{0.32\linewidth}
  \centering 
  \includegraphics[width=\linewidth]{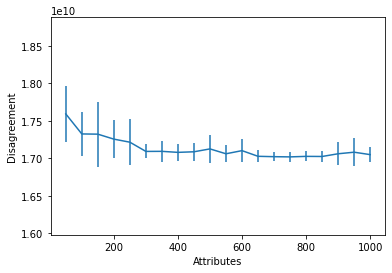}
  \caption{0.3 Mean 0.3 Correlation}
  \end{subfigure}
  \begin{subfigure}[b]{0.32\linewidth}
  \centering 
  \includegraphics[width=\linewidth]{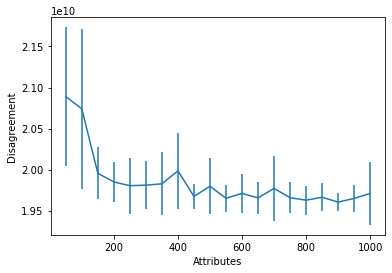}
  \caption{0.5 Mean 0.3 Correlation}
  \end{subfigure}
   \vspace{-6 pt}
    \begin{subfigure}[b]{0.32\linewidth}
  \centering 
  \includegraphics[width=\linewidth]{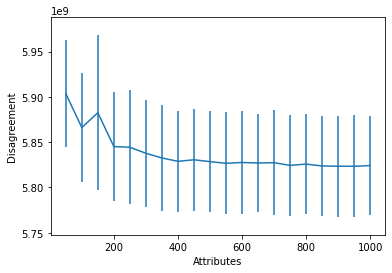}
  \caption{0.1 Mean 0.5 Correlation}
  \end{subfigure}
  \begin{subfigure}[b]{0.32\linewidth}
  \centering 
  \includegraphics[width=\linewidth]{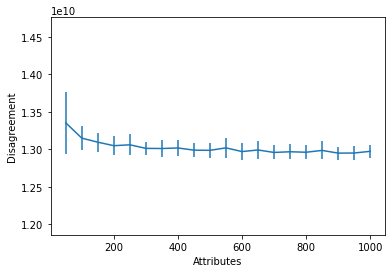}
  \caption{0.3 Mean 0.5 Correlation}
  \end{subfigure}
  \begin{subfigure}[b]{0.32\linewidth}
  \centering 
  \includegraphics[width=\linewidth]{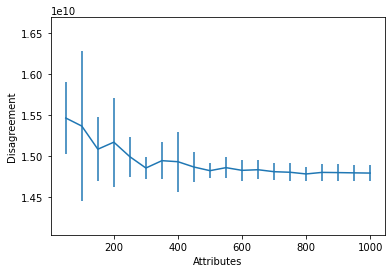}
  \caption{0.5 Mean 0.5 Correlation}
  \end{subfigure}
    \vspace{-6 pt}
    \begin{subfigure}[b]{0.32\linewidth}
  \centering 
  \includegraphics[width=\linewidth]{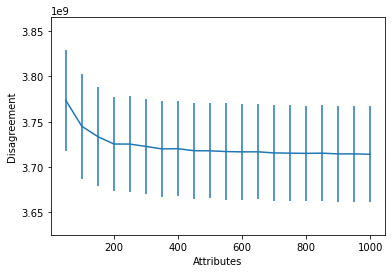}
  \caption{0.1 Mean 0.7 Correlation}
  \end{subfigure}
  \begin{subfigure}[b]{0.32\linewidth}
  \centering 
  \includegraphics[width=\linewidth]{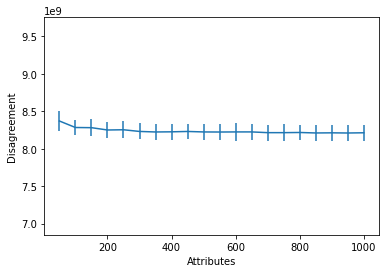}
  \caption{0.3 Mean 0.7 Correlation}
  \end{subfigure}
  \begin{subfigure}[b]{0.32\linewidth}
  \centering 
  \includegraphics[width=\linewidth]{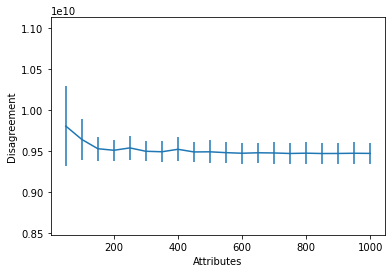}
  \caption{0.5 Mean 0.7 Correlation}
  \end{subfigure}
    \vspace{-6 pt}
    \begin{subfigure}[b]{0.325\linewidth}
  \centering 
  \includegraphics[width=\linewidth]{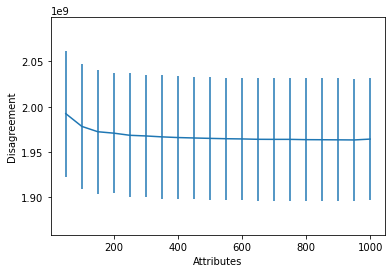}
  \caption{0.1 Mean 0.9 Correlation}
  \end{subfigure}
  \begin{subfigure}[b]{0.32\linewidth}
  \centering 
  \includegraphics[width=\linewidth]{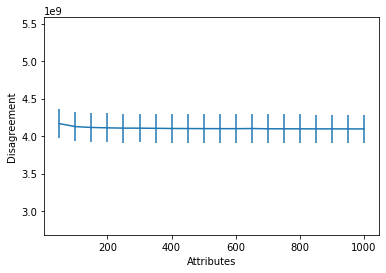}
  \caption{0.3 Mean 0.9 Correlation}
  \end{subfigure}
  \begin{subfigure}[b]{0.32\linewidth}
  \centering 
  \includegraphics[width=\linewidth]{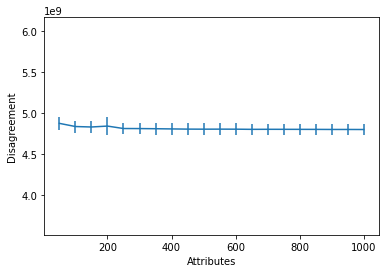}
  \caption{0.5 Mean 0.9 Correlation}
  \end{subfigure}
\end{figure*}

\begin{figure*}
\caption{Correlated Binary Data Ratios}
\label{fig:binary_consensus_multiples}
\centering
  \begin{subfigure}[b]{0.32\linewidth}
  \centering 
  \includegraphics[width=\linewidth]{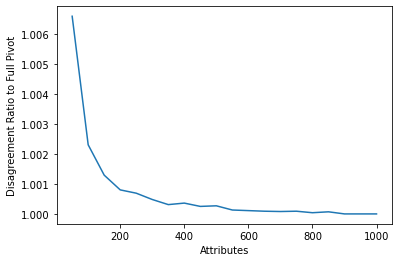}
  \caption{0.1 Mean 0.1 Correlation}
  \end{subfigure}
  \begin{subfigure}[b]{0.32\linewidth}
  \centering 
  \includegraphics[width=\linewidth]{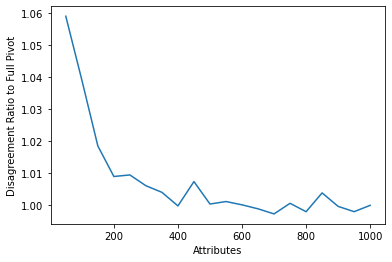}
  \caption{0.3 Mean 0.1 Correlation}
  \end{subfigure}
  \begin{subfigure}[b]{0.32\linewidth}
  \centering 
  \includegraphics[width=\linewidth]{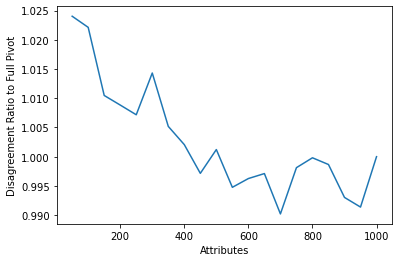}
  \caption{0.5 Mean 0.1 Correlation}
  \end{subfigure}
  \begin{subfigure}[b]{0.32\linewidth}
  \centering 
  \includegraphics[width=\linewidth]{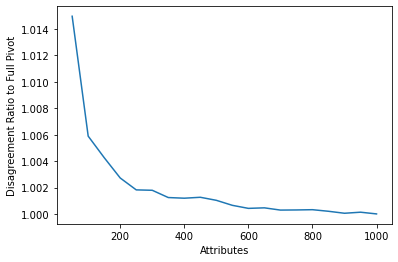}
  \caption{0.1 Mean 0.3 Correlation}
  \end{subfigure}
  \begin{subfigure}[b]{0.32\linewidth}
  \centering 
  \includegraphics[width=\linewidth]{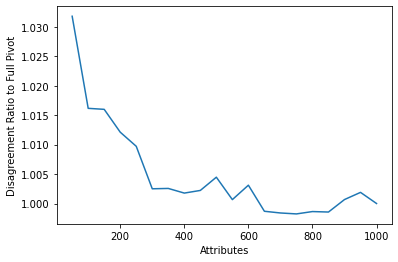}
  \caption{0.3 Mean 0.3 Correlation}
  \end{subfigure}
  \begin{subfigure}[b]{0.32\linewidth}
  \centering 
  \includegraphics[width=\linewidth]{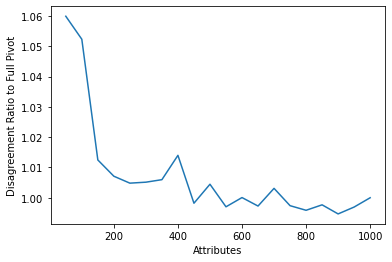}
  \caption{0.5 Mean 0.3 Correlation}
  \end{subfigure}
    \begin{subfigure}[b]{0.32\linewidth}
  \centering 
  \includegraphics[width=\linewidth]{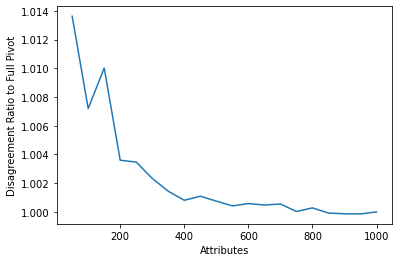}
  \caption{0.1 Mean 0.5 Correlation}
  \end{subfigure}
  \begin{subfigure}[b]{0.32\linewidth}
  \centering 
  \includegraphics[width=\linewidth]{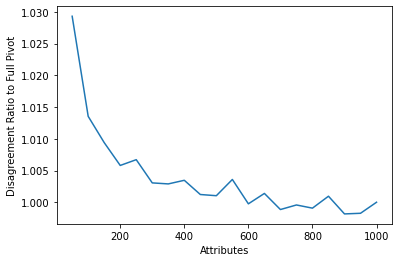}
  \caption{0.3 Mean 0.5 Correlation}
  \end{subfigure}
  \begin{subfigure}[b]{0.32\linewidth}
  \centering 
  \includegraphics[width=\linewidth]{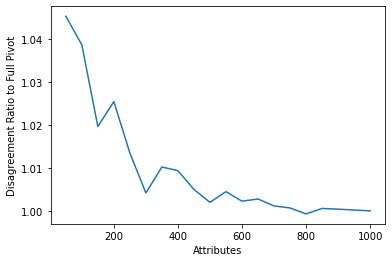}
  \caption{0.5 Mean 0.5 Correlation}
  \end{subfigure}
    \begin{subfigure}[b]{0.32\linewidth}
  \centering 
  \includegraphics[width=\linewidth]{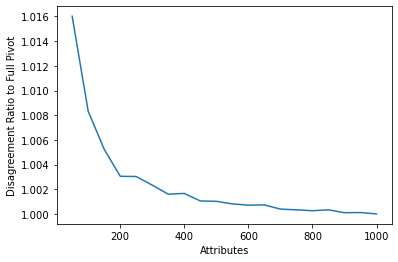}
  \caption{0.1 Mean 0.7 Correlation}
  \end{subfigure}
  \begin{subfigure}[b]{0.32\linewidth}
  \centering 
  \includegraphics[width=\linewidth]{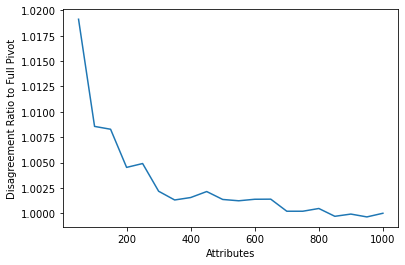}
  \caption{0.3 Mean 0.7 Correlation}
  \end{subfigure}
  \begin{subfigure}[b]{0.32\linewidth}
  \centering 
  \includegraphics[width=\linewidth]{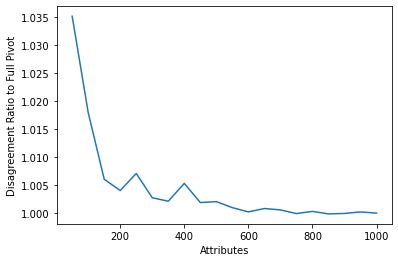}
  \caption{0.5 Mean 0.7 Correlation}
  \end{subfigure}
    \begin{subfigure}[b]{0.325\linewidth}
  \centering 
  \includegraphics[width=\linewidth]{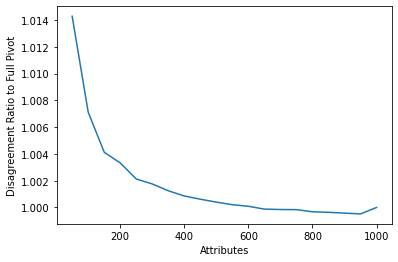}
  \caption{0.1 Mean 0.9 Correlation}
  \end{subfigure}
  \begin{subfigure}[b]{0.32\linewidth}
  \centering 
  \includegraphics[width=\linewidth]{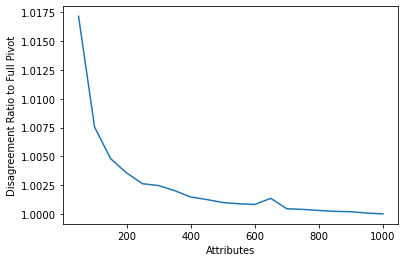}
  \caption{0.3 Mean 0.9 Correlation}
  \end{subfigure}
  \begin{subfigure}[b]{0.32\linewidth}
  \centering 
  \includegraphics[width=\linewidth]{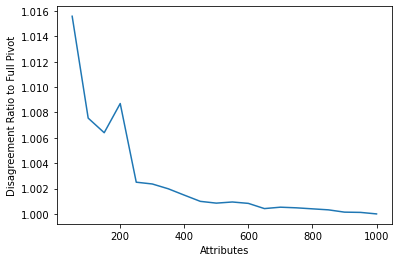}
  \caption{0.5 Mean 0.9 Correlation}
  \end{subfigure}
\end{figure*}

The approximation bound analysis in Section \ref{sec:pivot_sampling} for the Pivot sampling algorithm relied on the independence of attributes in the data set. Here we show experimentally that sampling still provides good results even if attributes are correlated.

We generated binary data sets with $|V| = 10000$ and $k = 1000$ using \texttt{bindata} in R \cite{bindataR, leisch1998generation}. All attributes are drawn using a fixed marginal probability and a fixed pairwise correlation, and 5 data sets are generated for each probability-correlation pair. For each data set we ran just the Pivot algorithm 10 times for each of $R\in \{50, \dots, 1000\}$ (increasing by 50 each time), tracking the disagreement distance from the resulting clustering to the input clusterings.

For each data set, we rescale outlier disagreements that are more than two times the median distance from the median to that boundary value. We plot the averages of the disagreements across the 5 data sets for each probability-correlation pair. We do not report the running times since even the longest runs finish in about 5 seconds. The exact disagreements are displayed in Figure \ref{fig:binary_consensus}. For comparison, we fix the $y$-axis size of the graphs for each mean. The ratios to Pivot on the full set of attributes given in Figure \ref{fig:binary_consensus_multiples}.

As expected, we see that the lines get flatter as the correlation increases. For example, the data sets with mean 0.3 start with a ratio of $1.059$ at $R = 50$ with correlation 0.1, then decreases to 1.032 (correlation 0.3), 1.029 (correlation 0.5), 1.019 (correlation 0.7), and 1.017 (correlation 0.9). The more correlated the attributes become, the more information we gain from smaller sample sets. 

In general the multiplicative ratios reported in the experiments stay within the theoretical bounds. There are a few minor exceptions, due to the averaging and rescaling of results from different data sets. For example, the value of $g(R)$ is 1.054 for $R = 50$ and 1.037 for $R = 100$. The result for 0.3 mean, 0.1 correlation reports a multiplicative ratio of 1.059 for $R = 50$ and 1.039 for $R = 100$ between the mean disagreement of the sample to the mean disagreement of the full set of clusterings. The 0.5 mean, 0.3 correlation and the 0.5 mean, 0.5 correlation graphs also have two bound violations each. The rest of the bounds on these data sets and the rest of the binary data sets fit within the predicted theoretical bounds.

\section{Conclusion}

In this paper we examined the consensus clustering problem from the standpoint of correlation clustering. We overcame some significant memory roadblocks to running the Pivot algorithm on larger data sets, and showed the viability of running other correlation clustering algorithms like LocalSearch, InnerLocalSearch, and Vote for consensus clustering too. We also examined the possibility of performing consensus clustering on limited samples of input clusterings, and showed that we can still obtain quality clustering results even while using relatively small sample sets.

\bibliography{sources}

\end{document}